
\documentclass[conference,letterpaper]{IEEEtran}

\usepackage{bm}

\usepackage{relsize}
\usepackage[utf8]{inputenc}
\usepackage[english]{babel}
\usepackage[T1]{fontenc}
\usepackage{hyperref}
\usepackage{graphicx}
\usepackage{multicol}

\usepackage{amsthm}
\usepackage{amssymb}
 
\newtheorem{theorem}{Theorem}
\newtheorem{corollary}[theorem]{Corollary}

\newtheorem{definition}[theorem]{Definition}

\addtolength{\topmargin}{9mm}

%
%
\usepackage[utf8]{inputenc} 
\usepackage[T1]{fontenc}
\usepackage{url}
\usepackage{ifthen}
\usepackage{cite}
\usepackage[cmex10]{amsmath} 


\interdisplaylinepenalty=2500 

\hyphenation{op-tical net-works semi-conduc-tor}

\begin{document}
\title{Changing the Local-Dimension of an Entanglement-Assisted Stabilizer Code Removes Entanglement Need} 


\author{%
  \IEEEauthorblockN{Lane G.~Gunderman}
  \IEEEauthorblockA{The Institute for Quantum Computing and
  Department of Physics and Astronomy,\\
  University of Waterloo, Waterloo, Ontario, N2L 3G1, Canada\\
                    Email: lgunderman@uwaterloo.ca}
}


\maketitle

\begin{abstract}
Having protected quantum information is essential to perform quantum computations. One possibility is to reduce the number of particles needing to be protected from noise and instead use systems with more states, so called qudit quantum computers. In this paper we show that codes for these systems can be derived from already known codes, and in particular this procedure removes the need for prior shared entanglement in entanglement-assisted quantum error-correcting codes, which is a result which could prove to be useful for fault-tolerant qudit, and even qubit, quantum computers as well as certain quantum communication tasks.
\end{abstract}

\section{Introduction}

Having protected quantum information is an essential piece of being able to perform controlled quantum computation operations. There are a variety of methods to help protect quantum information such as those discussed in \cite{lidar2013quantum}. In this paper we focus on entanglement-assisted quantum error-correcting codes (EAQECC) as hinted at in passing in \cite{bennett1996mixed}, but fully developed in generality in \cite{brun2006correcting}. EAQECC are similar in nature to stabilizer codes--the quantum analog of linear codes--but differ in their use of entanglement to allow for the immediate importing of many more classical linear codes to the quantum setting. Entanglement is a central resource in quantum computing and the sensitivity of entanglement to errors means that we ought to protect these entangled particles as well, meaning that higher-order error-correction will be required, however, EAQECC schematically assumes that single particles from these entangled pairs are already safely with the receiving party. The qudit version of EAQECC was shown in \cite{luo2017non}. In this work we consider EAQECC and show that it is possible to remove this entanglement requirement upon changing the local-dimension (or number of levels for each particle) that the code is applied to compared to the code's originally designed local-dimension. We further show that not only can we remove the entanglement need, but also we can at least preserve the distance of the code, and so improve the utility of these code, so long as entanglement is not completely free.

An EAQECC is specified by a set of partly non-commuting Pauli generators. Those generators in the code that do not commute can be written such that any non-commutation relations are resolved through the use of an entangled pair of particles and the superdense coding protocol \cite{lidar2013quantum}. In the qudit case the entangled pair of particles used is given by:
\begin{equation}
    |\Phi_+^q\rangle=\frac{1}{\sqrt{q}}\sum_{i=0}^{q-1} |i,i\rangle.
\end{equation}
EAQECC allow the dual code space constraint from the Calderbank-Shor-Steane (CSS) theorem to be ignored--allowing any classical code to be imported to the quantum case \cite{brun2006correcting}. This allows for immediate translation of classical error-correcting codes into quantum error-correcting codes so long as a source of shared entanglement is available.

Even with error-correcting codes having sufficient amounts of protected quantum information to perform useful tasks is still an unresolved challenge. A way to retain a similarly sized computational space while reducing the number of particles that need precise controls and carefully regulated environments is to replace the standard choice of qubits with \textit{qudits}, quantum particles with $q$ levels. Throughout this work we require $q$ to be a prime so that each nonzero element has a unique multiplicative inverse over $\mathbb{Z}_q$. This restriction can likely be removed, but for simplicity and clarity we only consider this case. Experimental realizations of these systems are currently underway \cite{qudition,quditlight,molecules}, so having more error-correcting codes will aid in protecting such systems. Prior work on qudit error-correcting codes have often had challenging restrictions between the parameters of the code \cite{quditgeneral,quditbch,quditmds}, and we've already made progress on reducing this barrier in a prior paper \cite{gunderman2020local}. Our prior work showed the ability to make error-correcting codes that preserved their parameters, generally, even upon changing the local-dimension of the system. Beyond this, these systems also have proven connections to foundational aspects of physics \cite{contextuality}. Seeing these potential reasons for using qudits, this work builds off of our prior work.

\section{Definitions}
In this section, we recall common definitions and results for qudit operators. A qubit is defined as a two level system with states $|0\rangle$ and $|1\rangle$. We define a qudit as being a quantum system over $q$ levels, where $q$ is prime. Throughout we take $\mathbb{Z}_q$ as the set $\{0,1,\ldots,q-1\}$.

\begin{definition}\label{dim}
Generalized Paulis for a space over $q$ orthogonal levels are given by:
\begin{equation}
 X_q|j\rangle=|(j+1)\mod q\rangle,\quad Z_q|j\rangle=\omega^j|j\rangle
\end{equation}
with $\omega=e^{2\pi i/q}$, where $j\in\mathbb{Z}_q$. These Paulis form a group, denoted $\mathbb{P}_q$.
\end{definition}

When $q=2$, these are the standard qubit operators $X$ and $Z$, with $Y=iXZ$. This group structure is preserved over tensor products since each of these Paulis has order $q$.

As shown in \cite{brun2006correcting} and \cite{luo2017non}, an EAQECC is specified by $s$ commuting Pauli operators and a set of $c$ Pauli operator pairs $\{\mathcal{X}_i,\mathcal{Y}_i\}$ that do not commute:
\begin{equation}
\mathcal{X}_i\odot \mathcal{Y}_i\neq 0, \quad \forall i    
\end{equation}
while all other operators commute. We let $k=s+2c$ be the total number of $n$-qudit generalized Pauli operator generators used to specify the code. Note that this is slightly different from the standard choice as this work focuses on the total number of generators opposed to many works which focus more on the number of encoded particles.

Although the generators in an EAQECC do not all commute, they do form a group. The entirety of the group, with the scalar coefficient quotiented out, is composed of all possible compositions ($\circ$) of the generators. This forms a subgroup of size $q^k$ as each generator has order $q$. This then leads to there being $q^{n+c-k}$ orthonormal basis states, or codewords, where the additive factor of $c$ in the exponent is due to the added space of the entangled particles.

Finding the commutator of these generators with an error provides the \textit{syndrome} of that error. These syndromes provide insight into which error may have occurred so that we can determine the error and potentially undo it. The standard choice of error model is the depolarizing channel which depends on the weights of the errors:

\begin{definition}
The weight of an $n$-qudit Pauli operator is given by the number of non-identity operators in it.
\end{definition}

\begin{definition}
An EAQECC, specified by its generators, is characterized by a set of parameters:
\begin{itemize}
\item $n$: the number of particles that are transmitted through the code (in the traditional communication setting this is the number of particles sent to the receiver)
\item $n+c-k$: the number of encoded (logical) qudits
\item $d$ (for non-degenerate codes (where all group members have weight at least $d$)): the distance of the code, given by the lowest weight of an undetectable generalized Pauli error (commutes with all elements of the group, but is not in the group itself)
\item $c$: the number of entangled pairs needed in order to resolve commutation relations between the generators of the code
\end{itemize}
These values are specified for a particular code as: $[[n, n+c-k,d;c]]_q$, where $q$ is the local-dimension of the qudits.
\end{definition}

The minimal value of $c$ needed for a particular set of generators was shown in \cite{wilde2008optimal}. Working with tensors of operators can be challenging, and so we make use of the following well-known mapping from these to vectors, following the notation from \cite{gunderman2020local}. This representation is often times called the symplectic representation for the operators, but we use this notation instead to allow for greater flexibility. This linear algebraic representation will be used for our proofs.

\begin{definition}[$\phi$ representation of a qudit operator]
We define the surjective map: 
\begin{equation}
\phi_q: \mathbb{P}_q^n\mapsto \mathbb{Z}_q^{2n}
\end{equation}
which carries an $n$-qudit Pauli in $\mathbb{P}_q^n$ to a $2n$ vector mod $q$, where we define this map as:
\begin{multline}
\phi_q(\omega^\alpha \otimes_{i-1} I\otimes X_q^a Z_q^b\otimes_{n-i} I)\\
=( 0^{ i-1}\ a\ 0^{ n-i} | 0^{ i-1}\ b\ 0^{ n-i}),
\end{multline}
which puts the power of the $i$-th $X$ operator in the $i$-th position and the power of the $i$-th $Z$ operator in the $(i+n)$-th position of the output vector. This mapping is defined as a homomorphism with: $\phi_q(s_1\circ s_2)=\phi_q(s_1)\oplus \phi_q(s_2)$, where $\oplus$ is component-wise addition mod $q$. We denote the first half of the vector as $\phi_{q,x}$ and the second half as $\phi_{q,z}$.
\end{definition}

We may invert the map $\phi_q$ to return to the original $n$-qudit Pauli operator with the global phase being undetermined. We make note of a special case of the $\phi$ representation:

\begin{definition}
Let $q$ be the dimension of the initial system. Then we denote by $\phi_\infty$ the mapping:
\begin{equation}
    \phi_\infty:  \mathbb{P}_q^n\mapsto \mathbb{Z}^{2n}
\end{equation}
where no longer are any operations taken $\mod$ some base, but instead carried over the full set of integers.
\end{definition}

The ability to generally define $\phi_\infty$ as a homomorphism still (and with the same rule) is a portion of the results of this paper--shown in Theorem \ref{inv}. $\phi_q$ is the standard choice for working over $q$ bases, however, our $\phi_\infty$ allows us to avoid being dependent on the local-dimension of our system when working with our code. Formally we will write a code in $\phi_q$, perform some operations, then write it in $\phi_\infty$, then select a new local-dimension $q'$ and use $\phi_{q'}$. We shorten this to write it as $\phi_\infty$, and can later select to write it as $\phi_{q'}$ for some prime $q'$ by taking element-wise $\mod q'$.

The commutator of two operators in this picture is given by the following definition:
\begin{definition}
Let $s_i,s_j$ be two qudit Pauli operators over $q$ bases, then these commute if and only if:
\begin{equation}
\phi_q(s_i)\odot \phi_q(s_j)=0\mod q
\end{equation}
where $\odot$ is the symplectic product, defined by:
\begin{multline}
\phi_q(s_i)\odot \phi_q(s_j)\\ =\oplus_k [\phi_{q,z}(s_j)_k\cdot  \phi_{q,x}(s_i)_k- \phi_{q,x}(s_j)_k \cdot \phi_{q,z}(s_i)_k]
\end{multline}
where $\cdot$ is standard integer multiplication $\mod q$ and $\oplus$ is addition $\mod q$.
\end{definition}

When the commutator of $s_i$ and $s_j$ is not zero, this provides the difference in the number of $X$ operators in $s_i$ that must pass a $Z$ operator in $s_j$ and the number of $Z$ operators in $s_i$ that must pass an $X$ operator in $s_j$ when attempting to switch the order of these two operators. We will use these values, without taking modulo $q$, to prove Theorem \ref{inv}.

Before finishing, we make a brief list of some possible operations we can perform on our $\phi$ representation for an EAQECC:
\begin{enumerate}
    \item We may perform elementary row operations over $\mathbb{Z}_q$, corresponding to relabelling and composing generators together.
    \item We may swap registers (qudits) in the following ways:
        \begin{enumerate}
            \item We may swap columns $(i,i+n)$ and $(j,j+n)$ for $0<i,j\leq n$, corresponding to relabelling qudits.
            \item We may swap columns $i$ and $(-1)\cdot (i+n)$, for $0<i\leq n$, corresponding to conjugating by a Hadamard gate on particle $i$ (or Discrete Fourier Transforms in the qudit case \cite{qudit}) thus swapping $X$ and $Z$'s roles on that qudit.
        \end{enumerate}
\end{enumerate}

All of these operations leave the code parameters $n$, $k$, and $d$ alone, but can be used in proofs.

\section{Local-Dimension-Invariant EAQECC}

In this section we prove how to remove the entanglement need from EAQECC and under what conditions we can promise the distance of the code is at least preserved. To this end we begin with a loosened definition of local-dimension-invariant stabilizer codes which were first introduced in \cite{gunderman2020local}:

\begin{definition}
A code is called \textit{effectively local-dimension-invariant} if all generators commute over $p$ levels, $p\neq q$, upon evaluating all entries at a pre-determined function of $p$ while the original code over $q$ is unchanged.
\end{definition}

If we can transformed an EAQECC into an effectively local-dimension-invariant code, the non-commuting generators are transformed into commuting generators and so removing the need for entanglement. Being able to transform any particular EAQECC into an effectively local-dimension-invariant code is one challenge, however, as we will show shortly not only can \textit{all} EAQECC codes be turned into an effectively local-dimension-invariant form, but we also provide a prescriptive technique to transform any given code.

The key observation needed to show this is that we may break up the commutator of the generators over the integers into two parts each removed through a different technique. Let $c_{ij}=\phi_\infty(s_i)\odot\phi_\infty(s_j)$ for generators $s_i,s_j$ in the code. Define $\alpha_{ij}=c_{ij}\mod q$ so that $c_{ij}=\alpha_{ij}+m_{ij}q$ for some integer $m_{ij}$. We note that we can always rewrite this as $c_{ij}=\alpha_{ij}+n_{ij}q+(m_{ij}-n_{ij})q$ such that $\alpha_{ij}+n_{ij}q\mod p=0$, which will allow for the removal of all entanglement requirements from the code, once we remove $(m_{ij}-n_{ij})q$. We will remove this remaining term through the addition of a lower triangular $k\times k$ matrix, $L$, where the exact values of the entries are a pre-determined function of the new local-dimension $p$. The following proof is similar to the original invariant procedure from \cite{gunderman2020local}, but requires care around particular cases.

\begin{theorem}\label{inv}
All EAQECC over $q$ levels can be made into an \textit{effectively local-dimension-invariant} stabilizer code.
\end{theorem}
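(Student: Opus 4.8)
The plan is to follow the lifting strategy the excerpt sets up: write the full generating set in $\phi_q$, lift it to $\phi_\infty$ by choosing integer representatives, and then correct the lifted generators so that every pairwise symplectic product vanishes modulo the new dimension $p$ while leaving all residues modulo $q$ untouched. Concretely, I would first assemble the $k$ generators as the rows of a $k \times 2n$ integer matrix, record the antisymmetric integer commutator array $c_{ij} = \phi_\infty(s_i) \odot \phi_\infty(s_j)$, and note that since $q$ and $p$ are distinct primes, $q$ is invertible modulo $p$. The target for an \emph{effectively local-dimension-invariant} form is exactly $c_{ij} \equiv 0 \pmod p$ for all $i,j$; equivalently the multiple-of-$q$ part of each commutator must be driven to satisfy $m_{ij} \equiv -\alpha_{ij} q^{-1} \pmod p$. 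For the commuting generators this target is $0$, whereas for each non-commuting pair $\{\mathcal{X}_i,\mathcal{Y}_i\}$ with $\alpha_{ij} \neq 0$ it is the nonzero value $-\alpha_{ij}q^{-1}$, so one correction simultaneously enforces commutation over $p$ and dissolves the shared-entanglement requirement.

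Next I would separate the two contributions as indicated. The piece $\alpha_{ij} + n_{ij}q$ is handled by choosing $n_{ij} \equiv -\alpha_{ij}q^{-1} \pmod p$ so that it is already divisible by $p$; this is the step that absorbs the genuine mod-$q$ non-commutation coming from the entangled pairs. What remains in each commutator is the multiple of $q$ given by $(m_{ij}-n_{ij})q$, and this is removed by adjoining the strictly lower-triangular $k \times k$ matrix $L$, whose entries are prescribed functions of $p$. Interpreting $L$ as the composition of each generator with integer powers of the earlier-indexed generators (an integer lift of the allowed row operations) and/or as insertions of multiples of $q$ into the appropriate block, each such modification shifts the integer commutator by a controllable multiple of $q$ while leaving every entry fixed modulo $q$. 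Hence the original code over $q$ is provably unchanged and only the behaviour over $p$ is altered.

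The order of corrections is what makes a lower-triangular $L$ suffice. Processing the entries $(i,j)$ with $i>j$ in a fixed order, each correction is chosen to kill the residual multiple of $q$ in a single below-diagonal entry using only generators already placed; by antisymmetry the matching above-diagonal entry is fixed for free and the diagonal entries are identically zero, so no commutator is disturbed once it has been set. This is the same triangular bookkeeping as in the original invariance construction of the prior work, which I would reuse essentially verbatim once the entanglement contribution has been folded into the targets $n_{ij}$.

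The main obstacle, and the source of the ``care around particular cases,'' is guaranteeing that a suitable adjustment exists for every below-diagonal entry: a single row operation or $q$-multiple insertion perturbs a whole column of commutators at once, so one must verify that the triangular ordering genuinely decouples the equations and that a usable pivot --- a coordinate in which the relevant generator carries an $X$ (or $Z$) that the already-fixed generators avoid --- is always available. When no such pivot is present, I would expose one by conjugating with a discrete-Fourier (Hadamard) gate to swap the $X$ and $Z$ roles on a qudit via operation (2b), or relabel qudits via operation (2a); these leave $n$, $k$, and $d$ intact. Dispatching the degenerate configurations where the naive pivot vanishes, and confirming that the simultaneous choice of all $n_{ij}$ for the entangled pairs is consistent with the elimination, is the crux; the remainder is the routine check that $L$ has well-defined $p$-dependent entries and that the corrected matrix reduces to the original generators modulo $q$.
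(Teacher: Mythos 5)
Your decomposition $c_{ij}=\alpha_{ij}+n_{ij}q+(m_{ij}-n_{ij})q$ with $n_{ij}\equiv -\alpha_{ij}q^{-1}\pmod p$, and the plan to cancel the leftover multiple of $q$ with a lower-triangular $L$ whose entries are multiples of $q$, is exactly the paper's strategy. The gap is in the step you yourself flag as ``the crux'': you never establish that the corrections encoded in $L$ can be applied independently, entry by entry. You hedge between interpreting $L$ as integer row operations and as insertions of multiples of $q$ into some block, and then propose to hunt for ``pivots'' case by case, applying Hadamards or relabelings when one is missing. But a row operation perturbs a whole row of the commutator matrix at once, and an insertion of a multiple of $q$ at an arbitrary coordinate perturbs the commutator with every generator supported there, so without more structure the equations for the $L_{ij}$ are coupled and their simultaneous solvability is precisely what needs proof, not an assertion that ``the triangular ordering genuinely decouples the equations.''

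The paper dissolves this issue before $L$ ever appears: it first brings the generators to the canonical form $\phi_q(S)=[\,I_k\ X_2\ |\ Z_1\ Z_2\,]$ using the listed row and qudit operations (possible because the $k$ generators are independent, so $k$ independent columns can be moved into the $X$-block by relabelings and Hadamard swaps, then row-reduced to $I_k$). Once the $X$-block is the identity, adding $(0\ |\ L_i\ 0)$ to generator $i$ shifts the symplectic product with generator $j$ by exactly $\mp L_{ij}$, because the product of the $L$-row against the $X$-part of the other generator collapses to a single entry --- that $X$-part, restricted to the first $k$ coordinates, is a standard basis vector. Hence each below-diagonal entry of $[\odot]_\infty$ is tuned by its own entry of $L$ with no interference, the above-diagonal entries follow by antisymmetry, the diagonal is zero, and no pivot search or consistency check is needed. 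Supplying this canonical-form reduction is the missing idea; with it in place the rest of your argument goes through as written.
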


\begin{proof}
Let $S$ be an EAQECC with parameters $[[n,n+c-k,d; c]]_q$, we may write this code as $\phi_q(S)$. When using the initial generators of $S$ in $\phi_q(S)$, the symplectic product matrix $[\odot]_q$, containing all pairwise commutator values taken mod $q$, will have exactly $2c$ nonzero entries corresponding to the generators whose commutators need to be resolved via entanglement. We will now allow the number of nonzero entries to change by transforming $\phi_q(S)$ via the rules outlined earlier to an EAQECC canonical form:
\begin{equation}
    \phi_q(S)=\begin{bmatrix}
     I_k & X_2 & | & Z_1 & Z_2
    \end{bmatrix}
\end{equation}
where $Z_1$ is a $k\times k$ matrix, and $X_2$ and $Z_2$ are $k\times (n-k)$ matrices. Let $[\odot]_\infty$ be the anti-symmetric symplectic commutator matrix, written over the integers. We will add a lower triangular matrix $L$ to $Z_1$ such that after this addition we leave the code alone over $\mod q$, and yet have $[\odot']_p=\mathbf{0}$ upon evaluation for any choice of $p$, with $p\neq q$.

Let $c_{ij}=[[\odot]_\infty]_{ij}=\phi_\infty(s_i)\odot \phi_\infty(s_j)$. Upon addition of the $L$ matrix, our updated generators $S'$ will be given by:
\begin{equation}
    \phi_\infty(S')=\begin{bmatrix}
     I_k & X_2 & | & Z_1+L\ & Z_2
    \end{bmatrix}.
\end{equation}

\begin{figure*}[!htbp]
\normalsize
\begin{eqnarray}\label{steps}
    \phi_\infty (s_i')\odot \phi_\infty (s_j')&=& [\phi_\infty(s_i)+(0\ |\ L_i\ 0)]\odot [\phi_\infty(s_j)+(0\ |\ L_j\ 0)]\\
    &=& \phi_\infty(s_i)\odot \phi_\infty (s_j)+\phi_\infty(s_i)\odot (0\ |\ L_j\ 0)+(0\ |\ L_i\ 0)\odot \phi_\infty (s_j)+(0\ |\ L_i\ 0)\odot (0\ |\ L_j\ 0)\\
    &=& c_{ij}+0-L_{ij}+0\\
    &=& \alpha_{ij}+n_{ij}q+(m_{ij}-n_{ij})q-L_{ij}.\label{end}
\end{eqnarray}
\hrulefill
\vspace*{4pt}
\end{figure*}

For these updated generators, $S'$, we have from equations (\ref{steps})-(\ref{end}), when $i>j$:
\begin{equation}
    \phi_\infty (s_i')\odot \phi_\infty (s_j')=\alpha_{ij}+n_{ij}q+(m_{ij}-n_{ij})q-L_{ij}.
\end{equation}
When $i<j$ the commutator value is just the additive inverse of the $j>i$ case as the symplectic product matrix is anti-symmetric.

\if{false}
\begin{widetext}
\begin{eqnarray}
    \phi_\infty (s_i')\odot \phi_\infty (s_j')&=& [\phi_\infty(s_i)+(0\ |\ L_i\ 0)]\odot [\phi_\infty(s_j)+(0\ |\ L_j\ 0)]\\
    &=& \phi_\infty(s_i)\odot \phi_\infty (s_j)+\phi_\infty(s_i)\odot (0\ |\ L_j\ 0)+(0\ |\ L_i\ 0)\odot \phi_\infty (s_j)+(0\ |\ L_i\ 0)\odot (0\ |\ L_j\ 0)\\
    &=& c_{ij}+0-L_{ij}+0\\
    &=& \alpha_{ij}+n_{ij}q+(m_{ij}-n_{ij})q-L_{ij}.
\end{eqnarray}
\end{widetext}
\fi
Let $\nu=q\mod p$, with $\nu\in\mathbb{Z}_p$. Set $n_{ij}=-\nu^{-1}\alpha_{ij}$, then $\alpha_{ij}+n_{ij}\nu\mod p=0$. From this, we also have $\alpha_{ij}+n_{ij}\nu=\alpha_{ij}+n_{ij}q\mod p$, and so $\alpha_{ij}+n_{ij}q=0\mod p$, meaning that the first two terms in the updated commutator disappear upon evaluating at a chosen $p$ value. Lastly, setting $L_{ij}=(m_{ij}-n_{ij})q$ and adding this lower triangular matrix enforces commutation over $p$ by subtracting off the remaining term.
\end{proof}

Before moving on, we emphasize that for the above constructive proof, the only parameter that must be determined for a give value of $p$ is $\nu^{-1}$, and so they are only effectively local-dimension-invariant. A couple of remarks about $\nu^{-1}$ are in order. First, notice that in the above proof the inability to perform this invariant forming operation over $q$ is manifest as $\nu^{-1}$ is not defined as 0 never has a multiplicative inverse. Second, we make note a couple of crucial cases for $\nu$ in the above proof. When $p>q$ then $L_{ij}=c_{ij}$ and $\nu^{-1}=q^{-1}\mod p$. When $p=2$, $L_{ij}=c_{ij}+(q-1)\alpha_{ij}$. The only cases where the code is not truly local-dimension-invariant but only \textit{effectively} local-dimension-invariant is when $2<p<q$.

The above Theorem merely transforms an EAQECC into a set of commuting generators. Following Theorem 16 from \cite{gunderman2020local}, in order to make statements about the distance of this transformed code we must bound the maximal entry in $\phi_\infty(S')$:

\begin{corollary}
The maximal entry in $\phi_\infty(S')$, $B$, upon selecting a new local-dimension $p$, $p>q$, satisfies:
\begin{equation}
    B\leq [2+(n-k)(q-1)](q-1).
\end{equation}
\end{corollary}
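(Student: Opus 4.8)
The plan is to locate where the largest entry of $\phi_\infty(S')$ can possibly sit and bound it only there, since every other block is already controlled. In the canonical form $\phi_\infty(S')=[\,I_k\ X_2\ |\ Z_1+L\ Z_2\,]$, the blocks $I_k$, $X_2$, and $Z_2$ inherit their entries from the original code written over $\mathbb{Z}_q$, so each of their entries is at most $q-1$. The only block whose entries can grow is $Z_1+L$, because $L$ is assembled from the integer commutators. Hence $B=\max_{i,j}|(Z_1+L)_{ij}|$, and it suffices to bound this single quantity.

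Because the corollary restricts to $p>q$, we are in the truly local-dimension-invariant regime, and by Theorem \ref{inv} together with the remark following it we have $L_{ij}=c_{ij}$ in the lower-triangular part. I would therefore reduce the problem to bounding $|(Z_1)_{ij}+c_{ij}|$, and immediately apply $|(Z_1)_{ij}|\le q-1$ with the triangle inequality to get $|(Z_1+L)_{ij}|\le (q-1)+|c_{ij}|$. Everything then hinges on bounding the integer commutator $c_{ij}=\phi_\infty(s_i)\odot\phi_\infty(s_j)$.

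To bound $c_{ij}$ I would evaluate the symplectic product directly in the canonical form, splitting the sum over the $n$ registers into the first $k$ columns (governed by $I_k$ and $Z_1$) and the last $n-k$ columns (governed by $X_2$ and $Z_2$). In the first block the $I_k$ structure collapses the sum to $(Z_1)_{ji}-(Z_1)_{ij}$, whose absolute value is at most $q-1$ since both entries lie in $\{0,\dots,q-1\}$. In the second block one obtains a sum of $n-k$ terms, each of the form $(Z_2)_{jm}(X_2)_{im}-(X_2)_{jm}(Z_2)_{im}$; as every factor lies in $\{0,\dots,q-1\}$, each term is at most $(q-1)^2$ in magnitude, so this block contributes at most $(n-k)(q-1)^2$. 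Combining gives $|c_{ij}|\le (q-1)+(n-k)(q-1)^2$.

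Assembling the pieces yields $|(Z_1+L)_{ij}|\le 2(q-1)+(n-k)(q-1)^2=[2+(n-k)(q-1)](q-1)$, which dominates the $\le q-1$ entries of the remaining blocks and therefore bounds $B$. The main obstacle, and really the only place care is needed, is the explicit evaluation of the symplectic product in the canonical form: one must correctly track how the identity block's contribution partially cancels against the added $(Z_1)_{ij}$ and then resist computing $(Z_1)_{ij}+c_{ij}$ exactly, since the clean factored bound emerges precisely from applying the triangle inequality loosely rather than chasing the exact cancellation.
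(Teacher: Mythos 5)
Your proposal is correct and follows essentially the same route as the paper: for $p>q$ one has $L_{ij}=c_{ij}$, the entry of $Z_1+L$ is bounded by $(q-1)+|c_{ij}|$, and $|c_{ij}|\leq B-(q-1)=[1+(n-k)(q-1)](q-1)$ gives the result. The only difference is that you re-derive the commutator bound $|c_{ij}|\leq (q-1)+(n-k)(q-1)^2$ directly from the canonical form, whereas the paper simply cites this bound from the prior work \cite{gunderman2020local}; your derivation is a correct and self-contained substitute for that citation.
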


\begin{proof}
We begin by noting from our prior work that we have $c_{ij}\leq B-(q-1)$, where $B$ is given by $[2+(n-k)(q-1)](q-1)$ \cite{gunderman2020local}. Here we have: \begin{eqnarray}
    L_{ij}&=&(m_{ij}+\nu^{-1}\alpha_{ij})q\\
    &=&\alpha_{ij}+m_{ij}q+(\nu^{-1}q-1)\alpha_{ij}\\
    &=&c_{ij}+(q^{-1}q-1)\alpha_{ij}\\
    &\leq & B-(q-1).
\end{eqnarray}
Then the maximal entry is upper bounded by $B-(q-1)+(q-1)=B$, which is the same as for stabilizer codes and any tightening on the bound of $B$ there will apply in this bound as well.
\end{proof}

We can now combine the ability to create effectively local-dimension-invariant forms for EAQECC codes and this entry bound to ensure that the EAQECC will have at least the same distance upon being transformed into a stabilizer code. According to Theorem 16 from \cite{gunderman2020local}, if brought from $q$ levels to $p$ levels with $p>p^*$, where $p^*=B^{2(d-1)}[2(d-1)]^{(d-1)}$ then the distance will be at least preserved. As the $n$ particles being transmitted can have their errors classified into the same subsets of undetectable errors as was used to prove the distance bound for stabilizer codes, this immediately provides the following theorem:

\begin{theorem}\label{primary}
We may transform any non-degenerate $[[n,n+c-k,d;c]]_q$ EAQECC into a $[[n,n-k,d';0]]_p$ stabilizer code with $d'\geq d$ so long as $p$ is a prime with $p>p^*$.
\end{theorem}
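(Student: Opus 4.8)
The plan is to assemble the theorem directly from the three ingredients already in hand. First I would take the given non-degenerate EAQECC with parameters $[[n,n+c-k,d;c]]_q$ and apply Theorem \ref{inv} to put it in effectively local-dimension-invariant form: adding the lower triangular matrix $L$ to the $Z_1$ block produces generators $S'$ whose symplectic commutator matrix $[\odot']_p$ vanishes identically upon evaluation at the chosen prime $p\neq q$. Since every pair of generators now commutes over $p$, none of the $c$ commutation relations needs to be resolved by a shared entangled pair, so the transformed object is a genuine stabilizer code with $c=0$. Counting parameters: there are still $k$ independent generators acting on the $n$ transmitted qudits, so the number of logical qudits becomes $n-k$, which is exactly the target $[[n,n-k,d';0]]_p$.

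Next I would control the magnitude of the entries so that the distance result can be invoked. By the Corollary, for $p>q$ the maximal entry $B$ of $\phi_\infty(S')$ satisfies $B\leq [2+(n-k)(q-1)](q-1)$, the very same bound that governs the ordinary stabilizer setting. With this $B$ fixed, I would apply Theorem 16 of \cite{gunderman2020local}: choosing $p$ prime with $p>p^*$, where $p^*=B^{2(d-1)}[2(d-1)]^{(d-1)}$, guarantees that changing the local dimension from $q$ to $p$ does not decrease the distance, yielding $d'\geq d$. Note that $p>p^*$ comfortably forces $p>q$, so the entry bound from the Corollary is in force throughout.

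The step I expect to be the main obstacle is justifying that the distance bound of Theorem 16, proved for ordinary stabilizer codes, transfers to the entanglement-assisted setting. The distance of an EAQECC is measured only on the $n$ transmitted particles, whereas the commutator computations and the canonical form implicitly involve the $c$ ancillary halves of the entangled pairs held by the receiver. I would therefore need to argue carefully that the undetectable errors on the $n$ transmitted qudits fall into exactly the same equivalence classes used in the stabilizer distance argument, i.e.\ that the extra entangled registers contribute nothing to the classification of transmitted-particle errors, so that the weight-counting underlying the $p>p^*$ condition is unaffected. The non-degeneracy hypothesis is what licenses this weight comparison, since it ensures the distance is the minimum weight of an undetectable error. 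Once this correspondence of error classes is secured, distance preservation carries over verbatim, and the three ingredients combine to produce the claimed $[[n,n-k,d';0]]_p$ stabilizer code with $d'\geq d$ and no remaining entanglement requirement.
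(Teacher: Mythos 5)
Your proposal follows essentially the same route as the paper: apply Theorem \ref{inv} to obtain the effectively local-dimension-invariant form, invoke the corollary's bound on the maximal entry $B$, and then apply Theorem 16 of the prior work with $p>p^*$, noting that undetectable errors on the $n$ transmitted particles classify into the same subsets as in the stabilizer-code distance argument. The paper presents exactly this chain (in the paragraph preceding the theorem rather than a formal proof environment), and your identification of the error-classification step as the point requiring care matches the one assertion the paper makes without elaboration.
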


Beyond this, using the same reasoning as in our prior work, we can also define logical operators for these codes \cite{gunderman2020local}. Putting together all the results, we have defined quantum error-correcting codes which can protect information, remove entanglement use, and have logical operators, and while the distance of the code can only be promised at sufficiently many bases, it is possible to preserve the distance even below this cutoff as the following examples show.

We now apply this procedure to the $[[4,1,3;1]]_2$ code from \cite{brun2006correcting}, given by:
\begin{equation}
    \begin{matrix}
    X\\
    Z\\
    I\\
    I
    \end{matrix}\ \vline\ \begin{matrix}
     Z & X & Z & I\\
     Z & Z & I & Z\\
     Y & X & X & Z\\
     Z & Y & Y & X
    \end{matrix},
\end{equation}
where we have left the entanglement particle there for clarity. The four generators used to protect the transmitted particles can be written in the $\phi_2$ representation as:
\begin{equation}
    \begin{bmatrix}
     0 & 1 & 0 & 0 & | & 1 & 0 & 1 & 0\\
     0 & 0 & 0 & 0 & | & 1 & 1 & 0 & 1\\
     1 & 1 & 1 & 0 & | & 1 & 0 & 0 & 1\\
     0 & 1 & 1 & 1 & | & 1 & 1 & 1 & 0
    \end{bmatrix}.
\end{equation}
We put this into canonical form by applying a Hadamard on particle four, then performing RREF. Applying Theorem \ref{inv} we obtain an invariant form of:
\begin{equation}
    \begin{bmatrix}
     1 & 0 & 0 & 0 & | & 1 & 0 & 1 & 1\\
     0 & 1 & 0 & 0 & | & 1 & 0 & 1 & 0\\
     0 & 0 & 1 & 0 & | & 0 & 1 & 0 & 1\\
     0 & 0 & 0 & 1 & | & 1 & 2 & -1 & 0
    \end{bmatrix}
\end{equation}
This code has $d= 3$ for $p>3$ as no linear combination of columns corresponding to weight two Paulis are linearly dependent. We have transformed this into a $[[4,0,3;0]]_p$ code for $p>3$. Note that this does not mean for $p=3$ it is not possible to modify the code such that the distance is still preserved, just that this prescriptive method does not provide it given the canonical form used. We provide the $p=3$ case later, but wait to discuss it.

A more concise way to summarize this result is by considering the rate of this code upon performing this transformation. This technique alters the rate of an EAQECC in the following ways, following the definitions from \cite{wilde2010entanglement}:
\begin{itemize}
    \item The entanglement-assisted rate is altered from $(n+c-k)/n$ to $(n-k)/n$.
    \item The trade-off rate is altered from $((n+c-k)/n,c/n)$ to $((n-k)/n,0)$. 
    \item The catalytic rate is unchanged from $(n-k)/n$.
\end{itemize}
The correct choice of which rate definition to use depends on the application. The entanglement-assisted rate assumes that entanglement sharing is free, the trade-off rate allows for some unspecified cost for the entanglement, while the catalytic rate assumes that the entanglement costs roughly the same as transmitting a particle. So long as entanglement is not free these rate changes can be of use. These changes to the rates require the following pair of caveats that: one, the local-dimension must be changed, and, two, these rate changes are only proper so long as the distance of the code is also at least preserved. If the distance is not preserved, the rate will change still, but the quality of protection for the code has dropped making the comparison on unequal footing.

\section{Future Directions}

While the above example considered $p>q$, the following example shows that it is possible to have $p<q$ and still obtain at least as good parameters:
\begin{equation}\label{double}
    \begin{bmatrix}
     0 & 11 & 3 & 4 & | & 12 & 11 & 11 & 12\\
     14 & 6 & 14 & 9 & | & 13 & 8 & 5 & 0\\
     4 & 13 & 10 & 11 & | & 10 & 1 & 3 & 2\\
     0 & 13 & 4 & 9 & | & 11 & 5 & 0 & 0
    \end{bmatrix}.
\end{equation}
This is a $[[4,2,2;2]]_5$ code as well as a $[[4,0,3;0]]_3$ code. This provides the $p=3$ case of the example considered before, showing that we could achieve $p^*=q$. Not only does the application of this result remove the need for entanglement for this code, but it also improved the distance of the code.

Theorem \ref{primary} provides a promise on the distance of the code. For choices of $p<p^*$ one would need to computationally check whether the distance of the code is at least preserved. As remarked before, this procedure for making the code \textit{effectively local-dimension-invariant} is not unique. Even if the distance is not preserved at a value of $p$ using the given procedure, it does not mean that there is not another procedure which will preserve the distance of the code while obtaining the entanglement removal. We have not yet been able to find a procedure which always preserves the distance for $p>q$, but believe that it is possible, and so leave this as a future direction.

In this work we've proven a method to remove the needed entanglement in EAQECC upon changing the local-dimension as well as conditions to ensure the distance of the code remains. This result firstly allows for the removal of needing to send shared entangled particles between two parties in a communication setting, assuming they are using EAQECC. This means that only the $n$ particles must be sent and protected, removing the need to protect the shared entangled particles during the transmission of them. Secondly this result has implications in standard stabilizer error-correcting codes. Since EAQECC codes do not need to obey dual constraints from the CSS theorem, arbitrary classical codes can be imported into this setting, where they will require entanglement, but then this entanglement usage can be removed by altering the local-dimension and applying the method provided here. For instance, if $p^*$ can be reduced to $q$, a classical binary LDPC code could be imported using EAQECC on qubits, then using these methods the code could be used over qutrits \textit{without} requiring any entanglement--effectively producing an LDPC quantum code without needing entanglement. Unfortunately, in order to achieve this the value for $p^*$ must be decreased and the distance promise must be shown for the case of degenerate codes as well, as LDPC codes typically utilize the degeneracy to achieve their high rates with high distance.

The results shown here provide another use of local-dimension-invariant stabilizer codes, and so naturally there are questions as to what other uses this technique will have. In addition to this method, is it possible to apply this technique to show some foundational aspect of quantum measurements? Beyond this, this work also makes the challenge of reducing $p^*$ more crucial than before as it would reduce the number of values of $p$ that need to have their distance computationally checked.

\section*{Acknowledgments}

We thank Arun Moorthy for help with writing a program to generate the example from equation (\ref{double}) and David Cory, Mark Wilde, and Markus Grassl for helpful feedback.

\section*{Funding}
This work was supported by Industry Canada, the Canada
First Research Excellence Fund (CFREF), the Canadian Excellence Research Chairs (CERC 215284) program, the Natural Sciences and Engineering Research Council of Canada
(NSERC RGPIN-418579) Discovery program, the Canadian
Institute for Advanced Research (CIFAR), and the Province
of Ontario.

\bibliographystyle{unsrt}
\phantomsection  
\renewcommand*{\bibname}{References}

\bibliography{main}

\if{false}
\begin{abstract}
   Instructions
  are given for the preparation and submission of papers for the
  \emph{2021 International Symposium on Information Theory}.
\end{abstract}

\textit{A full version of this paper is accessible at:}
\url{https://arxiv.org/pdf/21xx.xxxx.pdf} 

\section{Submission of Papers for Review}

Papers in the form of a PDF file, formatted as described below, may be
submitted online at
\begin{center}
  \url{https://2021.ieee-isit.org/Papers.asp}
\end{center}
The deadline for registering the paper and uploading the manuscript is \textbf{January 27, 2021}.

A paper's primary content is restricted in length to \textbf{five pages} but
authors are allowed an optional sixth page only containing references.
The IEEEtran-conference style should be used as presented here.
Submissions should use a font size no smaller than 10 points and have reasonable margins on all the 4 sides of the text.

Each paper must be classified as ``eligible for student paper award''
or ``not eligible for student paper award''. \textbf{Do not include any text concerning the student paper award in the abstract or manuscript.}

\section{Submission of Accepted Papers}

Accepted papers will be published in full. A program booklet and a book of abstracts will also be
distributed at the Symposium to serve as a guide to the sessions.

The deadline for the submission of the final camera-ready paper will
be announced in due course.  Accepted papers not submitted by that
date will not appear in the ISIT proceedings and will not be included
in the technical program of the ISIT.

\section{Paper Format}

\subsection{Templates}

The paper (A4 or letter size, double-column format, not exceeding
5~pages) should be formatted as shown in this sample \LaTeX{} file
\cite{Laport:LaTeX, GMS:LaTeXComp, oetiker_latex, typesetmoser}.

The use of Microsoft Word instead of \LaTeX{} is strongly
discouraged. However, acceptable formatting may be achieved by using
the template that can be downloaded from the ISIT 2021 website:
\begin{center}
  \url{https://2021.ieee-isit.org/}
\end{center}

Users of other text processing systems should attempt to duplicate the
style of this example, in particular the sizes and type of font, as
closely as possible.

\subsection{Formatting}

The style of references, equations, figures, tables, etc., should be
the same as for the \emph{IEEE Transactions on Information
  Theory}. The source file of this template paper contains many more
instructions on how to format your paper. So, example code for
different numbers of authors, for figures and tables, and references
can be found (they are commented out).

For instructions on how to typeset math, in particular for equation
arrays with broken equations, we refer to \cite{typesetmoser}.

Final papers should have no page numbers and no headers or footers (both will be added during the production of the proceedings).
The top and bottom margins should be at least 0.5 inches to leave room for page numbers.
The affiliation shown for authors should constitute a sufficient mailing
address for persons who wish to write for more details about the
paper.
All fonts should be embedded in the pdf file.

\subsection{PDF Requirements}

Only electronic submissions in form of a PDF file will be
accepted. The PDF file has to be PDF/A compliant. A common problem is
missing fonts. Make sure that all fonts are embedded. (In some cases,
printing a PDF to a PostScript file, and then creating a new PDF with
Acrobat Distiller, may do the trick.) More information (including
suitable Acrobat Distiller Settings) is available from the IEEE
website \cite{IEEE:pdfsettings, IEEE:AuthorToolbox}.

\section{Conclusion}

We conclude by pointing out that on the last page the columns need to
balanced. Instructions for that purpose are given in the source file.

Moreover, example code for an appendix (or appendices) can also be
found in the source file (they are commented out).


\section*{Acknowledgment}

We are indebted to Michael Shell for maintaining and improving
\texttt{IEEEtran.cls}.

\IEEEtriggeratref{4}



\fi
\end{document}